\def\BibTeX{{\rm B\kern-.05em{\sc i\kern-.025em b}\kern-.08em
    T\kern-.1667em\lower.7ex\hbox{E}\kern-.125emX}}
\newtheorem{theorem}{Theorem}
\newtheorem{definition}{Definition}
\newcommand{\dotminus}{\mathbin{\text{\@dotminus}}}
\newcommand{\@dotminus}{%
  \ooalign{\hidewidth\raise1ex\hbox{.}\hidewidth\cr$\m@th-$\cr}%
}
\title{\LARGE\textbf{
Convex and Nonconvex Sublinear Regression with Application to Data-driven Learning of Reach Sets}
}
\author{Shadi Haddad and Abhishek Halder
\thanks{Shadi Haddad and Abhishek Halder are with the Department of Applied Mathematics, University of California, Santa Cruz, CA 95064, USA, {\tt\small{\{shhaddad,halder\}@ucsc.edu}}.%
}}
\begin{document}

\maketitle
\pagenumbering{arabic}

\begin{abstract}
We consider estimating a compact set from finite data by approximating the support function of that set via sublinear regression. Support functions uniquely characterize a compact set up to closure of convexification, and are sublinear (convex as well as positive homogeneous of degree one). Conversely, any sublinear function is the support function of a compact set. We leverage this property to transcribe the task of learning a compact set to that of learning its support function. We propose two algorithms to perform the sublinear regression, one via convex and another via nonconvex programming. The convex programming approach involves solving a quadratic program (QP). The nonconvex programming approach involves training a input sublinear neural network. We illustrate the proposed methods via numerical examples on learning the reach sets of controlled dynamics subject to set-valued input uncertainties from trajectory data.    
\end{abstract}


\section{Introduction}\label{sec:introduction}
Motivated by the correspondence between compact sets and their support functions, in this work, we consider computationally learning compact sets by performing regression on their support functions from finite data. Our main idea is to algorithmically leverage the isomorphism between the support functions and the space of sublinear functions -- a subclass of convex functions.

Several works in the optimization, learning and statistics literature \cite{prince1990reconstructing,prince1991convex,fisher1997estimation,cai2018adaptive,kur2020suboptimality,soh2021fitting} have investigated the problem of estimating compact sets up to convexification from experimentally measured or numerically simulated (possibly noisy) data. While the problem is of interest across a broad range of applications (e.g., obstacle detection from range measurements, non-intrusive fault detection in materials and manufacturing applications, tomographic imaging in medical applications), we are primarily motivated in data-driven learning of reach sets for safety-critical systems-control applications.

Formally, the (forward in time) reach set $\mathcal{X}_{t}$ is defined as the set of states a controlled dynamical system may reach at a given time $t>0$ subject to a controlled deterministic dynamics $\dot{\bm{x}}=\bm{f}(t,\bm{x},\bm{u})$ where the state vector $\bm{x}\in\mathbb{R}^{d}$, the feasible input $\bm{u}(t)\in\:\text{compact}\; \mathcal{U}\subset\mathbb{R}^{m}$, and the initial condition $\bm{x}(t=0)\in\:\text{compact}\:\mathcal{X}_{0}\subset\mathbb{R}^{d}$.  Specifically,
\begin{align}
\!\mathcal{X}_{t}:=\!\!\!\!\!&\bigcup_{\text{measurable}\;\bm{u}(\cdot)\in\mathcal{U}}\!\!\!\!\!\!\!\!\!\{\bm{x}(t)\in\mathbb{R}^{d} \mid \dot{\bm{x}}=\bm{f}(t,\bm{x},\bm{u}), \: \bm{x}(t=0)\in\mathcal{X}_0, \nonumber\\
&\qquad\qquad\bm{u}(\tau)\in\mathcal{U}\;\text{for all}\; 0\leq \tau\leq t\}.
\label{DefReachSets}    
\end{align}
With the aforesaid assumptions on $\mathcal{X}_0,\mathcal{U}$ in place, we suppose that the vector field $\bm{f}$ is sufficiently smooth to guarantee compactness of $\mathcal{X}_t$ for all $t\geq 0$.

In the systems-control literature, there exists a vast body of works (see e.g., \cite{althoff2021set,chutinan1999verification,villanueva2015unified,le2010reachability,pecsvaradi1971reachable} as representative references) on reach set computation. Interests behind approximating these sets stem from the fact that their separation or intersection often imply safety or the lack of it.

While numerous algorithms and toolboxes exist for approximating the reach sets, the computational approaches differ depending on the \emph{representation of the set approximants}. In other words, different approaches have different interpretations of what does it mean to approximate a set. For example, parametric approximants seek for a simple geometric primitive (e.g., ellipsoid \cite{kurzhanski1997ellipsoidal,kurzhanskiy2006ellipsoidal,halder2018parameterized,halder2020smallest,haddad2021anytime}, zonotope \cite{girard2005reachability,althoff2011zonotope,althoff2015introduction} etc.) to serve as a proxy for the set. On the other hand, level set approximants \cite{mitchell2000level,mitchell2008flexible} seek for approximating a (value) function whose zero sub-level set is the reach set.

More recent works have specifically advocated  the data-driven learning of reach sets by foregoing models and only assuming access to (numerically or experimentally available) data. These works have also proposed geometric \cite{devonport2020estimating,alanwar2021data} and functional primitives \cite{devonport2021data,thorpe2021learning} as representations. To the best of the authors' knowledge, using the support function as data-driven learning representation for reach sets, as proposed here, is new.

\subsubsection*{Contributions} Our specific contributions are twofold.

\noindent(i) We propose learning a compact set by learning its support function from the (possibly noisy) elements of that set available as finite data. We argue that support function as a learning representation for compact sets is computationally beneficial since several set operations of practical interest have exact functional analogues of operations on corresponding support functions. 

\noindent(ii) We present two algorithms (Sec. \ref{sec:Algorithms}) to learn the support function via sublinear regression: convex quadratic programming, and training an input sublinear neural network that involves nonconvex programming. We demonstrate the comparative performance of these algorithms via numerical examples (Sec. \ref{sec:NumericalResults}) on learning reach sets of controlled dynamics subject to set-valued uncertainties from trajectory data that may be available from simulation or experiments.

This work is organized as follows. In Sec. \ref{sec:SptFnPrelim}, we provide the necessary background on support functions of compact sets, and their correspondence with sublinear functions. Sec. \ref{sec:Algorithms} details the proposed sublinear regression framework using two approaches, viz. solving QP, and training an input sublinear neural network. Sec. \ref{sec:NumericalResults} exemplifies the proposed framework using two numerical case studies on data-driven learning of reach sets. Sec. \ref{sec:conclusions} concludes the paper.

\subsubsection*{Notations} We denote the $d$ dimensional unit sphere $\{\bm{x}\in\mathbb{R}^{d} \mid \|\bm{x}\|_2 = 1\}$ as $\mathbb{S}^{d-1}$, and the standard Euclidean inner product as $\langle\cdot,\cdot\rangle$. The convex hull and the closure of a set $\mathcal{X}$ are denoted as ${\rm{conv}}(\mathcal{X})$ and $\overline{\mathcal{X}}$, respectively. The Legendre-Fenchel conjugate of function $f:\mathcal{X}\mapsto\mathbb{R}$ is $f^{*}(\cdot) := \sup_{\bm{x}\in\mathcal{X}}\{\langle\cdot,\bm{x}\rangle - f(\bm{x})\}$. The conjugate $f^{*}$ is convex whether or not $f$ is. The rectifier linear unit (ReLU) function ${\rm{ReLU}}(\cdot) := \max\{\cdot,0\}$ is defined element-wise. For any natural number $n$, the finite set $\llbracket n\rrbracket:=\{1,2,\hdots,n\}$. 

\section{Background}\label{sec:SptFnPrelim}
\subsection{Support Functions}\label{subsec:sptfn}
For a non-empty set $\mathcal{X}\subseteq\mathbb{R}^{d}$, its \emph{support function} $h_{\mathcal{X}}:\mathbb{S}^{d-1}\mapsto\mathbb{R}\cup\{+\infty\}$ is defined as 
\begin{align}
h_{\mathcal{X}}(\bm{y}) := \underset{\bm{x}\in\mathcal{X}}{\sup}\langle\bm{y},\bm{x}\rangle, \quad\bm{y}\in\mathbb{S}^{d-1}.
\label{defSptFn}    
\end{align}
Being pointwise supremum, $h_{\mathcal{X}}(\cdot)$ is a convex function in its argument, and is finite if and only if \cite[Chap. C.2, Prop. 2.1.3]{hiriart2004fundamentals} $\mathcal{X}$ is bounded. From \eqref{defSptFn}, it also follows that the support function remains invariant under closure of convexification, i.e.,
$$h_{\mathcal{X}}(\cdot) = h_{\overline{{\rm{conv}}}(\mathcal{X})}(\cdot).$$
The Legendre-Fenchel conjugate of the support function $h_{\mathcal{X}}(\cdot)$ is the indicator function
\begin{align}
\bm{1}_{\mathcal{X}}(\bm{x}) := \begin{cases}
0 & \text{if} \;\bm{x}\in\mathcal{X},\\
+\infty & \text{otherwise},
\end{cases}
\label{DefIndicatorFnSet}    
\end{align}
see e.g., \cite[Thm. 13.2]{rockafellar1970convex}. This allows thinking $h_{\mathcal{X}}(\cdot)$ as a functional proxy for the set $\mathcal{X}$, i.e., the support function uniquely determines a compact set up to closure of convexification.

Furthermore, \eqref{defSptFn} has a clear geometric interpretation: $h_{\mathcal{X}}(\bm{y})$ quantifies the signed distance of the supporting hyperplane of compact $\mathcal{X}$ with outer normal vector $\bm{y}$, measured from the origin. This distance is negative if and only if $\bm{y}\in\mathbb{S}^{d-1}$ points into the open halfspace containing origin.

The convergence of compact sets w.r.t. the topology induced by the (two-sided) Hausdorff metric
\begin{align}
\delta_{\mathrm{H}}(\mathcal{P}, \mathcal{Q}):=\max \left\{\sup _{\boldsymbol{p} \in \mathcal{P}} \inf _{\boldsymbol{q} \in \mathcal{Q}}\|\boldsymbol{p}-\boldsymbol{q}\|_2,\sup _{\boldsymbol{q} \in \mathcal{Q}} \inf _{\boldsymbol{p} \in \mathcal{P}}\|\boldsymbol{p}-\boldsymbol{q}\|_2\right\}
\label{DefHausdorff}
\end{align}
for compact $\mathcal{P},\mathcal{Q}$, is equivalent to the convergence of the corresponding support functions. Specifically, a sequence of compact convex sets $\{\mathcal{X}_i\}_{i\in\mathbb{N}}$ converges to a compact convex set $\mathcal{X}$ (denoted as $\mathcal{X}_{i}\rightarrow\mathcal{X}$) in the Hausdorff topology if and only if $h_{\mathcal{X}_i}(\cdot) \rightarrow h_{\mathcal{X}}(\cdot)$ pointwise. For compact convex $\mathcal{P},\mathcal{Q}$, the Hausdorff metric \eqref{DefHausdorff} can be expressed in terms of the respective support functions:
\begin{align}
\delta_{\mathrm{H}}(\mathcal{P}, \mathcal{Q})=\sup _{\bm{y} \in \mathbb{S}^{d-1}}\left|h_{\mathcal{P}}(\bm{y})-h_{\mathcal{Q}}(\bm{y})\right|.
\label{HausdorffSptFn}
\end{align}

Conveniently, operations on sets can be seen as  operations on corresponding support functions. For instance,\\
(i) $\bm{x}\notin\overline{{\rm{conv}}}\left(\mathcal{X}\right)$ if and only if $\exists\:\bm{y}\in\mathbb{S}^{d-1}$ such that $\langle\bm{y},\bm{x}\rangle > h_{\mathcal{X}}(\bm{y})$,\\
(ii) $\mathcal{X}_{1}\subseteq\mathcal{X}_2$ if and only if $h_{\mathcal{X}_1}(\bm{y})\leq h_{\mathcal{X}_2}(\bm{y})$,\\
(iii) $h_{\mathcal{X}_1 + \hdots + \mathcal{X}_r}(\bm{y}) = h_{\mathcal{X}_1}(\bm{y}) + \hdots + h_{\mathcal{X}_r}(\bm{y})$,\\
(iv) $h_{\cup_{i=1}^{r}\mathcal{X}_{i}}(\bm{y}) = \max\{h_{\mathcal{X}_1}(\bm{y}), \hdots, h_{\mathcal{X}_r}(\bm{y})\}$,\\
(v) $h_{\cap_{i=1}^{r}\mathcal{X}_{i}}(\bm{y}) = \underset{{\boldsymbol{y}_1+\ldots+\boldsymbol{y}_r=\boldsymbol{y}}}{\inf} \left\{h_{\mathcal{X}_1}\left(\boldsymbol{y}_1\right)+\ldots+h_{\mathcal{X}_r}\left(\boldsymbol{y}_r\right)\right\}$,\\
(vi) $h_{\bm{A}\mathcal{X}+\bm{b}}(\bm{y}) = \langle\bm{y},\bm{b}\rangle + h_{\mathcal{X}}\left(\bm{A}^{\top}\bm{y}\right)$, $\bm{A}\in\mathbb{R}^{d\times d},\bm{b}\in\mathbb{R}^{d}$.
These correspondence motivate the possibility of using the support functions as computational learning representations for estimating compact sets from data.

We next point out that the support functions have additional structural properties beyond convexity which will be important in our algorithmic development.

\subsection{Sublinear Functions}\label{subsec:sublinfn}
A function $\psi:\mathbb{R}^{d}\mapsto\mathbb{R}\cup\{+\infty\}$ is called \emph{sublinear} if it is convex and positive homogeneous of degree one. The latter condition means that $$\psi(a\bm{x})=a\psi(\bm{x})\quad\forall a>0.$$
Alternatively, $\psi:\mathbb{R}^{d}\mapsto\mathbb{R}\cup\{+\infty\}$ is sublinear if and only if its epigraph is a nonempty convex cone in $\mathbb{R}^{d}\times\mathbb{R}$, see e.g., \cite[Chap. C.1, Prop. 1.1.3]{hiriart2004fundamentals}.

From \eqref{defSptFn}, the support function $h_{\mathcal{X}}(\cdot)$ is both convex and positive homogeneous of degree one, and hence a sublinear function. Conversely, any sublinear function can be viewed as support function of a compact set. The converse follows from the fact \cite[Thm. 8.13]{rockafellar2009variational} that a positive homogeneous convex function can be expressed as pointwise supremum of linear function. Thus, to learn a compact set is to learn its support function, and learning a support function from data leads to \emph{sublinear regression} as opposed to the well-known \emph{convex regression} \cite{holloway1979estimation,lim2012consistency}.

\begin{figure}[t]
\centerline{\includegraphics[width = 0.9\linewidth]{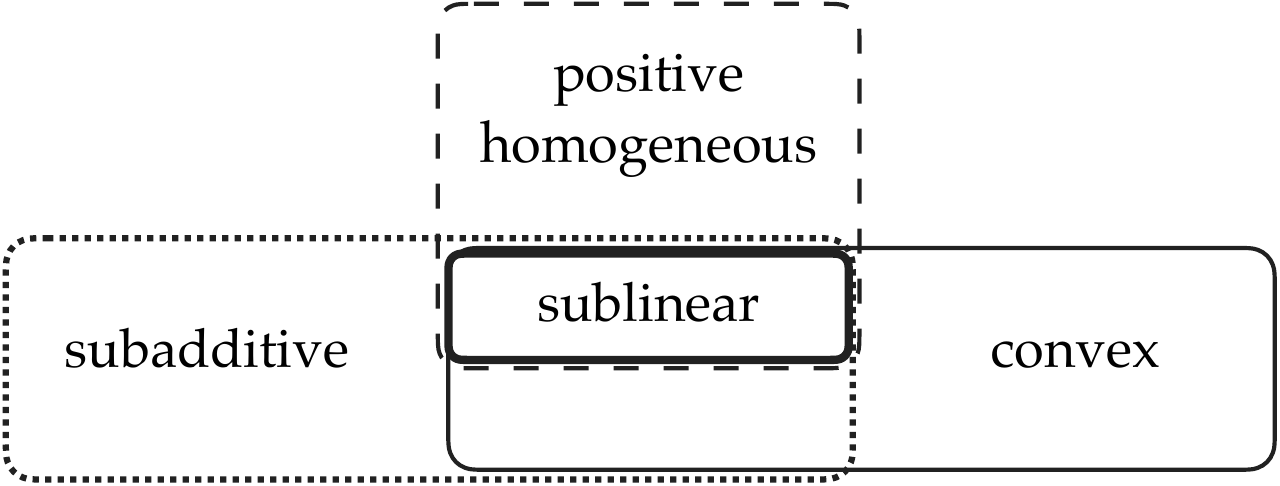}}
\caption{{\small{Euler diagram showing the relationships among the class of convex, positive homogeneous, subadditive and sublinear functions.}}}
\label{figVennDiagram}
\vspace*{-0.25in}
\end{figure}

Any sublinear function (and hence the support function) must also be subadditive, i.e.,
\begin{align}
\psi(\bm{y}+\bm{z}) \leq \psi(\bm{y}) + \psi(\bm{z}) \quad\forall\bm{y},\bm{z}\in\mathbb{R}^{d}.
\label{subadditive}
\end{align}
This can be seen as a joint consequence of convexity and positive homogeneity because specializing convexity to midpoint convexity implies
$$\psi\left(\frac{1}{2}\bm{y}+\frac{1}{2}\bm{z}\right)\leq \frac{1}{2}\psi(\bm{y}) + \frac{1}{2}\psi(\bm{z}),$$
which upon using positive homogeneity yields \eqref{subadditive}. 

Positive homogeneity and subadditivity together is equivalent to sublinearity, which also brings convexity. Following \cite[Chap. C.1]{hiriart2004fundamentals}, an Euler diagram among these classes of functions is shown in Fig. \ref{figVennDiagram}.


\section{Learning Support function via Sublinear Regression}\label{sec:Algorithms}
Following the background in Sec. \ref{sec:SptFnPrelim}, learning the support function of a compact set from data results in a sublinear regression problem, i.e., a regression problem where the to-be-learnt function is constrained to be sublinear. To this end, we next detail the data generation procedure followed by two proposed algorithms for the same.

\subsection{Data Generation}\label{subsec:DataGen}
In our context, the available data comprises of noisy elements of a compact set $\mathcal{X}\subset\mathbb{R}^{d}$. Our data is given by $\{\widehat{\bm{x}}_{j}\}_{j=1}^{n_x} = \{\bm{x}_{j} + \bm{\nu}_{j}\}_{j=1}^{n_x}$ for \emph{deterministic} $\bm{x}_{j}\in\mathcal{X}$, and the i.i.d. samples $\bm{\nu}_j$ are \emph{random} realizations of some noise vector $\bm{\nu}\in\mathbb{R}^{d}$ with zero mean and finite second moment. 

In Sec. \ref{sec:NumericalResults}, we will focus on reach sets of a controlled dynamical system with set-valued input uncertainties, and the finite set $\{\widehat{\bm{x}}_{j}\}_{j=1}^{n_x}$ will correspond to the states resulting from different feasible input sample paths.


We pose the problem of learning the set $\mathcal{X}$ as learning its support function. We propose to learn the latter by computing the \emph{estimate} $\widehat{h}_{\mathcal{X}}(\bm{y})$, $\bm{y} \in \mathbb{S}^{d-1}$, where
\begin{align}
    \widehat{h}_{\mathcal{X}}(\bm{y}_i) = \underset{\widehat{\bm{x}} \in \{\widehat{\bm{x}}_{j}\}_{j=1}^{n_x}}{\sup} \langle\bm{y}_i,\widehat{\bm{x}}\rangle, \quad \forall i\in\llbracket n_y\rrbracket.
    \label{SupFnoisy}
\end{align} 
We seek a sublinear function that ``well fits" the values \eqref{SupFnoisy}.

\subsection{Regression Algorithm}\label{subsec:LRegression}

\subsubsection{QP}\label{subsubsec:QP_LP}
In this approach, we propose a regression algorithm using standard least squares, i.e., by solving the \emph{infinite dimensional} variational problem: 
\begin{align}
\underset{\{{h}_{n_x}:\mathbb{R}^d \to \mathbb{R}\mid {h}_{n_x}(\cdot)\;\text{is sublinear}\}}{\arg\inf}\sum_{i=1}^{n_y} \left(\widehat{h}_{\mathcal{X}}(\bm{y}_i)-h_{n_x}(\bm{y}_i) \right)^2.
\label{LeastSquare}
\end{align}
The least squares problem enjoys the following consistency guarantee.
\begin{theorem}
The minimizer of \eqref{LeastSquare}, ${h}_{n_x}(\cdot)$, almost surely converges to the true support function ${h}_{\mathcal{X}}(\cdot)$ as $n_y,n_x \to \infty$.
\end{theorem}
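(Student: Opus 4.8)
The plan is to prove consistency through a two-stage limiting argument, treating the data limit $n_x\to\infty$ and the discretization limit $n_y\to\infty$ separately and then combining them. The organizing device is the identity \eqref{HausdorffSptFn}, which lets me measure the estimation error in the sup-norm $\sup_{\bm{y}\in\mathbb{S}^{d-1}}|h_{n_x}(\bm{y})-h_{\mathcal{X}}(\bm{y})|$ and read it as a Hausdorff distance between the underlying compact convex sets. By the triangle inequality I would split this error into the fitting term $\sup_{\bm{y}}|h_{n_x}(\bm{y})-\hat{h}_{\mathcal{X}}(\bm{y})|$ and the pointwise bias $\sup_{\bm{y}}|\hat{h}_{\mathcal{X}}(\bm{y})-h_{\mathcal{X}}(\bm{y})|$, and control each piece.

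\emph{Stage one (data limit).} I would show that the noisy empirical support function \eqref{SupFnoisy} converges to the true support function pointwise, i.e. $\hat{h}_{\mathcal{X}}(\bm{y})\to h_{\mathcal{X}}(\bm{y})$ for each fixed $\bm{y}$. Writing $\langle\bm{y},\widehat{\bm{x}}_j\rangle=\langle\bm{y},\bm{x}_j\rangle+\langle\bm{y},\bm{\nu}_j\rangle$, the clean part satisfies $\langle\bm{y},\bm{x}_j\rangle\le h_{\mathcal{X}}(\bm{y})$ with $\max_j\langle\bm{y},\bm{x}_j\rangle\uparrow h_{\mathcal{X}}(\bm{y})$ almost surely, provided the sampling law charges every neighborhood of the supporting point, which is an extreme-value (Glivenko-Cantelli) statement. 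The zero-mean, finite-variance noise assumption is then used to argue that $\max_j\langle\bm{y},\bm{\nu}_j\rangle$ does not inflate this maximum in the limit. Because every support function of a set contained in a ball $B(0,R)$ is $R$-Lipschitz on $\mathbb{S}^{d-1}$, the relevant family is equi-Lipschitz, so pointwise convergence on a countable dense subset of the sphere upgrades to uniform convergence by Arzel\`a--Ascoli.

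\emph{Stage two (discretization limit).} The key structural observation is that the true support function is itself sublinear by \eqref{defSptFn}, hence feasible for the QP: taking $h_i=h_{\mathcal{X}}(\bm{y}_i)$ and $\bm{g}_i\in\partial h_{\mathcal{X}}(\bm{y}_i)$ satisfies the subgradient inequalities $h_j\ge h_i+\bm{g}_i^{\top}(\bm{y}_j-\bm{y}_i)$. Consequently the optimal QP value is bounded above by $\frac{1}{n_y}\sum_i(\hat{h}_{\mathcal{X}}(\bm{y}_i)-h_{\mathcal{X}}(\bm{y}_i))^2$, which is already small by stage one; this forces the fitted values $h_i$ to track $h_{\mathcal{X}}(\bm{y}_i)$ at the sampled directions and, since the function values stay bounded over the sphere, keeps the estimator equi-Lipschitz. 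The LP extension $h_{n_x}(\bm{z})=\max_i\{\cdot\}$ is then an equi-Lipschitz convex interpolant that agrees with $h_{\mathcal{X}}$ (up to the stage-one error) on the sample directions; as these densify over $\mathbb{S}^{d-1}$, a second Arzel\`a--Ascoli argument yields uniform convergence $h_{n_x}\to h_{\mathcal{X}}$, which via \eqref{HausdorffSptFn} is precisely Hausdorff convergence of the learned set to $\overline{\mathrm{conv}}(\mathcal{X})$. This mirrors the known consistency theory for convex regression \cite{lim2012consistency}.

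The main obstacle I anticipate is the data limit, specifically controlling the additive noise inside the maximum \eqref{SupFnoisy}. Unlike an average, a maximum does not wash out zero-mean noise automatically: the quantity $\max_{j\le n_x}\langle\bm{y},\bm{\nu}_j\rangle$ can drift upward and, under only a finite-second-moment hypothesis, is merely $o(\sqrt{n_x})$ rather than $o(1)$. Making the pointwise bias vanish therefore requires either coupling the rates so that the clean samples approach the supporting hyperplane faster than the noise maximum grows, or strengthening the noise assumption (for instance, boundedness or sub-Gaussianity). Establishing this bias control, together with the uniformity over the sphere, is where the real work lies; the discretization limit and the final assembly are comparatively routine once the equi-Lipschitz bounds are in hand.
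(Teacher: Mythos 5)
Your route is genuinely different from the paper's, and the obstacle you flag at the end is a real gap, not a technicality. The paper does not attempt your stage one at all: it never claims that the noisy empirical support function $\hat{h}_{\mathcal{X}}(\bm{y}_i)$ converges to $h_{\mathcal{X}}(\bm{y}_i)$. Instead it decomposes the observation as $\hat{h}_{\mathcal{X}}(\bm{y}_i) = h_{\hat{\mathcal{X}}}(\bm{y}_i) + d_{n_x,i}$, where $h_{\hat{\mathcal{X}}}$ is the support function of the \emph{clean} samples (which converges to $h_{\mathcal{X}}$ as the samples densify in $\mathcal{X}$) and $d_{n_x,i} = \bm{y}_i^{\top}\bm{\nu}$ is treated as a zero-mean, finite-variance regression error; consistency then comes from the least-squares convex regression theorem of \cite{lim2012consistency} as $n_y\to\infty$, i.e., the averaging over the $n_y$ directions is what absorbs the noise. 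This is exactly the idea your plan is missing. As you correctly observe, under only a finite-second-moment assumption $\max_{j\le n_x}\langle\bm{y},\bm{\nu}_j\rangle$ does not tend to zero (for Gaussian noise it grows like $\sqrt{2\log n_x}$), so your stage one --- uniform convergence of the raw data to $h_{\mathcal{X}}$ --- is false as stated, and no amount of Arzel\`a--Ascoli downstream can repair it. The fix is not to couple rates or strengthen the noise model, but to relocate the noise into the error term of the regression and let the least-squares step, rather than the empirical maximum, do the denoising.

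Two further remarks. First, your stage two is sound in outline (feasibility of $h_{\mathcal{X}}$ with its subgradients in the QP, the resulting bound on the optimal value, densification of the sampled directions) and is essentially what the proof in \cite{lim2012consistency} carries out; but the equi-Lipschitz property of the fitted piecewise-linear extension is not automatic, since the QP places no constraint on $\|\bm{g}_i\|$ --- controlling the fitted subgradients is one of the delicate points in that reference, not a routine step. Second, the paper's own identity $\hat{h}_{\mathcal{X}}(\bm{y}_i)-h_{\hat{\mathcal{X}}}(\bm{y}_i)=\bm{y}_i^{\top}\bm{\nu}$ is itself only heuristic: the difference of two maxima attained at possibly different indices is not a single noise projection, and the noise attached to the maximizing sample is selected to be large, hence biased upward. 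So your instinct that the maximum is where the difficulty hides is aimed at a genuine weak point of the argument; your proposal just lodges that difficulty in a stage the paper's proof deliberately avoids.
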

\begin{proof}
Notice that for  $i=1,\cdots,n_y$, we have
\begin{align}
   d_{n_x,i}&:=\widehat{h}_{\mathcal{X}}(\bm{y}_i)-h_{\overline{{\rm{conv}}}\left(\{{\bm{x}}_{j}\right)\}_{j=1}^{n_x}}(\bm{y}_i) =\bm{y}_i^{\top}\bm{\nu}.
   \label{LSproof1}
\end{align}
Therefore, $d_{n_x,i}$ is i.i.d. with {\small{$\underset{\bm{x},\bm{\nu}}{\mathbb{E}}\big[d_{n_x,i}\big]=0$}} and {\small{$\underset{\bm{x},\bm{\nu}}{\mathbb{E}}\big[d_{n_x,i}^2\big]=1 $}}. As $n_x \to \infty$, we have $h_{\overline{{\rm{conv}}}\left(\{{\bm{x}}_{j}\right)\}_{j=1}^{n_x}}(\cdot) \to h_{{\mathcal{X}}}(\cdot)$ and the least squares problem \eqref{LeastSquare} becomes identical to the one investigated in \cite{lim2012consistency}, where the proof of consistency for $n_y \to \infty$ is provided.
\end{proof}
Given $\{\bm{y}_{i}\}_{i=1}^{n_y}$, solving \eqref{LeastSquare} reduces to solving a \emph{finite dimensional} convex QP \cite[Ch. 6.5.5]{boyd2004convex} as follows. Defining decision variables  $\bm{g}_1,\hdots,\bm{g}_{n_{y}}\in\mathbb{R}^{d}$ (the subgradients) and $\bm{h}:=(h_1,\hdots,h_{n_y})\in \mathbb{R}^{n_y}$, we solve the QP
\begin{align}
&\underset{\bm{g}_1,\hdots,\bm{g}_{n_y} \in \mathbb{R}^{d},\bm{h}\in\mathbb{R}^{n_y}}{\arg\min}  \quad\sum_{i=1}^{n_y}\left(\widehat{h}_{\mathcal{X}}(\bm{y}_i)-h_i \right)^2 \nonumber \\
& \text{subject to}\; h_j \geq h_i+\langle\bm{g}_i,\bm{y}_j-\bm{y}_i\rangle~\forall (i,j) \in \llbracket n_y\rrbracket\times\llbracket n_y\rrbracket.
\label{QP}
\end{align}
We then use the minimizing subgradients 
$\bm{g}_1^{\rm{opt}},\hdots,\bm{g}_{n_{y}}^{\rm{opt}}$ from \eqref{QP} to obtain a piecewise linear (PWL) estimate
\begin{align}
h^{\rm{PWL}}(\cdot)=\underset{i=1,\cdots,n_y}{\max} \bigg\{\widehat{h}_{\mathcal{X}}(\bm{y}_i)+\langle\bm{g}_i^{\rm{opt}},\cdot-\bm{y}
_i\rangle\bigg\}.
\label{PWL}
\end{align}
The rationale behind the PWL construction \eqref{PWL} is as follows. Being both convex and positive homogeneous of degree one, $h^{\rm{PWL}}(\cdot)$ in \eqref{PWL} must be a sublinear function. Recall also the fact we mentioned in Sec. \ref{subsec:sublinfn}: every sublinear function is expressible as pointwise supremum of linear functions \cite[Thm. 8.13]{rockafellar2009variational}.

\subsubsection{Input Sublinear Neural Network}\label{subsubsec:ICNN}
We propose input sublinear neural network (ISNN) as an alternative tool for performing sublinear regression to learn the support function of the set $\mathcal{X} \in \mathbb{R}^d$ in the noisy setting described in Sec. \ref{subsec:DataGen}.

\begin{definition}\label{def:ISNN}
We say a neural network (NN) is input sublinear neural network (ISNN) if the network's output is a sublinear function of the network's input vector $\bm{y}\in\mathbb{R}^{d}$.
\end{definition}

The ISNN structure we propose here is a particular instance of the input convex neural network (ICNN) architecture proposed by Amos \cite{amos2017input} which is suitable for convex regression with guarantees \cite{chen2018optimal}. To ease the exposition, we start with a brief recap of the ICNN architecture.  

An ICNN with $\ell$ layers is designed such that the network output is convex w.r.t. the input vector $\bm{y}\in\mathbb{R}^{d}$. 
Let the width of these layers be $\{n_1,\hdots,n_{\ell}\}$, and let $n_{\ell}:=1$, $n_0 := d$. For all $k\in\llbracket\ell\rrbracket$, the $k$\textsuperscript{th} layer of the network with width $n_k$ has associated weight matrices 
$\bm{W}^{(z)}_{k} \in \mathbb{R}^{n_{k}\times n_{k-1}}_{\geq 0}$, $\bm{W}^{(y)}_{k} \in \mathbb{R}^{n_{k}\times d}$ and bias vector $\bm{b}_{k}\in\mathbb{R}^{n_{k}}$. Furthermore, let $\bm{W}^{(z)}_{1}:=\bm{0}\in\mathbb{R}^{n_1 \times d}$ (zero matrix). Then, for a given input vector $\bm{y} \in \mathbb{R}^d$, the computation for each layer of ICNN involves:
\begin{align}
\bm{z}_1&=\bm{\sigma}\left( \bm{W}^{(y)}_{1}\bm{y}+\bm{b}_1 \right),\nonumber\\
\bm{z}_{k+1}&=\bm{\sigma}\left(\bm{W}^{(z)}_{k+1}\bm{z}_{k}+\bm{W}^{(y)}_{k+1}\bm{y}+\bm{b}_{k+1}\right),\quad k\in\llbracket\ell-2\rrbracket,\nonumber\\
\bm{z}_{\ell} &= \bm{W}^{(z)}_{\ell}\bm{z}_{\ell-1}+\bm{W}^{(y)}_{\ell}\bm{y}+\bm{b}_{\ell},
\label{ICNN}    
\end{align}
where the vector mapping $\bm{\sigma}$ comprises of element-wise application of the same activation function $\sigma(\cdot)$ that is assumed to be convex and non-decreasing. The ICNN model parameters $\bm{W}^{(z)}_{2:\ell}$, $\bm{W}^{(y)}_{1:\ell} $ and $\bm{b}_{1:\ell}$ are determined via the training of the network.

That the ICNN output $\bm{z}_{\ell}$ is guaranteed to be a convex function of the input $\bm{y}$ follows from two facts: \emph{first}, the summation of convex functions is convex; \emph{second}, a function composition where the inner function is convex and the outer function is convex non-decreasing, remains convex.

The main distinction between traditional neural networks and ICNN is that the activation function $\sigma(\cdot)$ in ICNN must be convex and non-decreasing such as ReLU, which is anyway a popular choice for many NN implementations. Furthermore, the weight matrices $\bm{W}^{(z)}_{1:\ell}$ connecting the \emph{feedforward} layers should be elementwise non-negative. The restriction on non-negtaive feedforward weights is compensated with additional \emph{passthrough} links that connect the input layer to each hidden layer \cite{he2016deep} via weight matrices $\bm{W}^{(y)}_{1:\ell}$ comprising of any real values. The bias vectors $\bm{b}_{1:\ell}$ comprise of real entries.


\begin{theorem}\label{Thm:ISNN}
The neural network \eqref{ICNN} is an ISNN, i.e., outputs a sublinear function of a given input vector $\bm{y} \in \mathbb{R}^d$, if $\bm{b}_{1:\ell}=\bm{0}$, and the activation function $\sigma(\cdot)$ is convex, non-decreasing and positive homogeneous of degree one.
\end{theorem}
\begin{proof}
 The computation for each layer of ISNN follows
\begin{align}
\bm{z}_1&=\bm{\sigma}\left( \bm{W}^{(y)}_{1}\bm{y}\right),\nonumber\\
\bm{z}_{k+1}&=\bm{\sigma}\left(\bm{W}^{(z)}_{k+1}\bm{z}_{k}+\bm{W}^{(y)}_{k+1}\bm{y}\right),\quad k\in\llbracket\ell-2\rrbracket,\nonumber\\
\bm{z}_{\ell} &= \bm{W}^{(z)}_{\ell}\bm{z}_{\ell-1}+\bm{W}^{(y)}_{\ell}\bm{y},
\label{ISNN}    
\end{align}
with elementwise non-negative matrices $\bm{W}^{(z)}_{1:\ell}$. Each layer performs composition of non-negative sums of linear functions followed by a convex, non-decreasing map. Thanks to the positive homogeneity of $\sigma(\cdot)$, this structure preserves the sublinearity w.r.t. the input vector $\bm{y}\in\mathbb{R}^{d}$.
\end{proof}

Examples of activation functions $\sigma(\cdot)$ which satisfy the conditions in Thm. \ref{Thm:ISNN} include ReLU, leaky ReLU and parametric ReLU with a positive parameter. In the numerical simulations reported here, we use the ReLU activation. 

For all ISNN implementations in the following Sec. \ref{sec:NumericalResults}, we use 5 hidden layers with the respective number of neurons $(5,20,50,20,5)$, and the mean squared error as the loss function. For the nonconvex training of the ISNN, we use the Adam optimizer \cite{kingma2014adam}, and project the Adam updates to the nonnegative orthant as
$$\bm{W}^{(z)}_{k}\mapsto {\rm{ReLU}}\left(\bm{W}^{(z)}_{k}\right) \quad \forall k\in\llbracket\ell\rrbracket.$$

\begin{figure}[tb]
    \centering
\includegraphics[width=\linewidth]{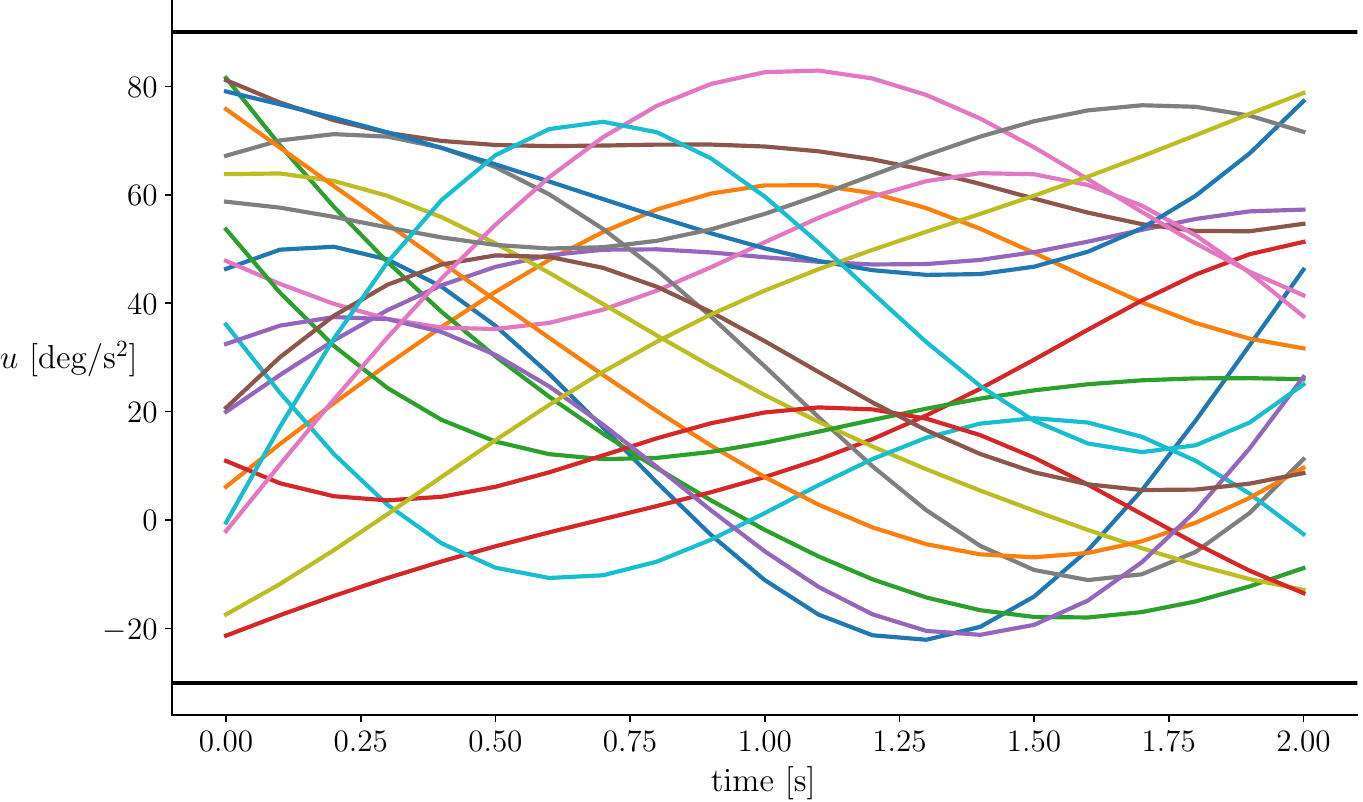}
    \caption{{\small{A subset of the 500 input sample paths $u_i(t) \in \mathcal{U}:=[-30^{\circ}, 90^{\circ}]   ~\text{deg}/\text{s}^2$ obtained from the constrained GP, that are used in Sec. \ref{subsec:Example1}. The dark horizontal lines denote the input constraints.}}}
\vspace*{-0.2in}
\label{fig:InputSamplePaths}
\end{figure}
\begin{figure*}[tb]
    \centering
    \includegraphics[width=\textwidth]{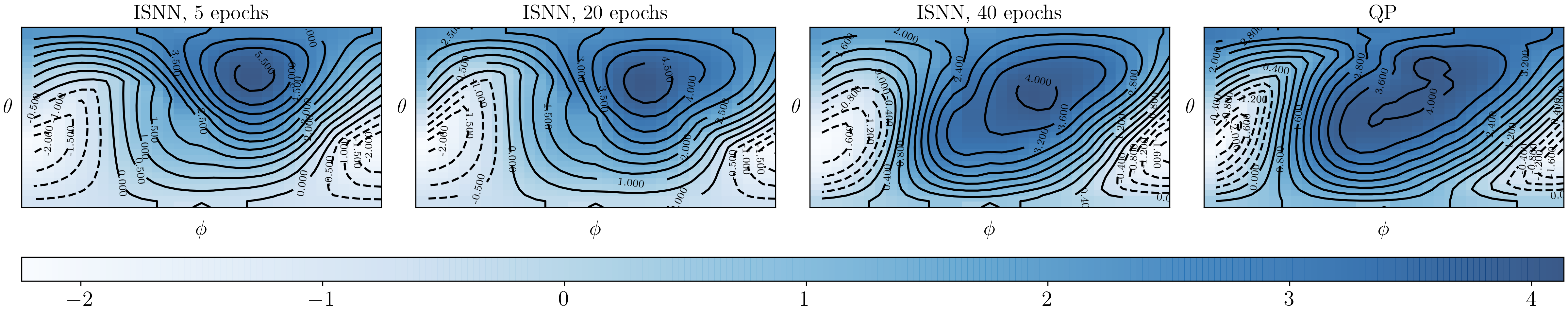}
    \caption{{\small{Contour plots of the estimated support function of the reach set $\mathcal{X}_t\subset {R}^{2}\times \mathbb{S}^{1}$ at $t=2$ s for the example in Sec. \ref{subsec:Example1}, resulting from the proposed sublinear regression. All subfigures are plotted over the spherical coordinates $(\phi,\theta) \in (-\pi,\pi]\times [-\pi/2,\pi/2]$. Different epochs for the ISNN results are indicated in the respective subfigures.}}}
\vspace*{-0.2in}
\label{Ex3d}
\end{figure*}

\section{Numerical Results}\label{sec:NumericalResults}
In this Section, we illustrate the proposed sublinear regression methods to learn the reach sets for two example controlled nonlinear dynamics with input uncertainties. We clarify here that we only use the controlled ODE models for data generation purpose, i.e., our computation is data-driven and is agnostic to the structural specificities of the models. All computation were performed in a MacBook Pro with 2.6 GHz 6-Core Intel i7 processor with 16 GB of memory.


\subsection{Sampling}\label{subsec:sampling}

We now describe the sampling procedure for both the numerical examples that follows.

For training data, we use the constrained Gaussian process (GP) to generate $n_x=500$ sample paths $\{\bm{u}_{i}(t)\}_{i=1}^{n_x}$ where each $\bm{u}_{i}(t)\in \mathcal{U}$. For instance, when $\mathcal{U}\subset\mathbb{R}^{m}$ is a hyperrectangle, then to ensure that the range of the sampled functions are in $\mathcal{U}$, we use the truncated multivariate Gaussian distributions \cite[Sec. 2.2]{robert1995simulation} for generating the constrained GP sample paths via the Gibbs Metropolis-Hastings Markov Chain Monte Carlo sampler. For hyperrectangle $\mathcal{U}$, we set the mean of the GP to be the center of the hyperrectangle, and the covariance function $k(\bm{x},\tilde{\bm{x}}):=\exp\left(-\|\bm{x}-\tilde{\bm{x}}\|_2^2/(2\ell^2)\right)$ with $\ell=0.7$. 

We then generate $n_y=200$ uniformly random unit vectors $\{\bm{y}_i\}_{i=1}^{n_y}$ in $\mathbb{R}^{d}$, and use \eqref{SupFnoisy} to form the training data
$\{(\bm{y}_i,\widehat{h}_{\mathcal{X}}(\bm{y}_i))\}_{i=1}^{n_y}$.

\subsection{Dubin's Car}\label{subsec:Example1}
We consider the controlled dynamics for the Dubin's car 
\begin{equation}
\dot{x}_1 =v\cos x_3,
\quad \dot{x}_2 =v\sin x_3,
\quad \dot{x}_3 =u,
\label{ex1}
\end{equation}
with the bounded (angular velocity) input $u(t) \in \mathcal{U}:=[-30^{\circ}, 90^{\circ}]   ~\text{deg}/\text{s}^2$ for all $t\geq 0$, and constant translational velocity $v=2 ~\text{m}/\text{s}$. The state vector $\left(x_1,x_2,x_3\right)^{\top}\in\mathbb{R}^{2}\times \mathbb{S}^{1}$ comprises of the longitudinal position, the lateral position, and the heading angle, respectively. We suppose that the initial set $\mathcal{X}_0 = \{\bm{0}\}$ is singleton, i.e., zero initial condition with no uncertainties.

Fig. \ref{fig:InputSamplePaths} shows a subset of the constrained GP-generated input sample paths $\{u_{i}(t)\}_{i=1}^{n_x}\in\mathcal{U}$ for this example.
\vspace{- 0.2 in}
\begin{center}
\begin{table}
\centering
\begin{tabular}{c c c c c} 
 \hline
 Instance &\quad \quad & ISNN, 30 epochs  & \quad \quad & QP\\
 \hline\hline
 1& \quad & 6.76 &\quad& 60.78\\ 
 \hline
 2& \quad & 6.66 &\quad& 60.52\\
 \hline
 3 & \quad& 6.88 &\quad& 63.54 \\
 \hline
 4 & \quad& 6.68 &\quad& 67.02\\
 \hline
  5& \quad & 6.63 &\quad& 66.55\\
 \hline
  6& \quad & 7.09 &\quad& 66.92\\
 \hline
  7& \quad & 6.63 &\quad& 73.82\\
 \hline
  8& \quad & 6.66 &\quad& 74.56\\
 \hline
 9& \quad & 6.66 &\quad& 71.91 \\
 \hline
 10& \quad & 6.98 &\quad& 69.60 \\
 \hline
\end{tabular}
\caption{{\small{Computational times [s] incurred by the sublinear regression for the example in Sec. \ref{subsec:Example1} for 10 different random sampling instances.}}} 
\label{tab:table}
\end{table}
\end{center}

Fig. \ref{Ex3d} depicts the learnt support functions over $\mathbb{S}^2$, visualized in spherical coordinates, for the reach set $\mathcal{X}_t$ of dynamics \eqref{ex1} at $t = 2$ s, using the proposed sublinear regressions ISNN and QP. The support functions resulting from the ISNN are depicted for three different number of epochs: 5, 20 and 40 epochs. As expected, increasing the number of epochs makes the ISNN solutions approach the QP solution.

Overall, we find that solving the QP results in a more robust estimation w.r.t. the input data compared to the ISNN. However, since the  number of constraints in QP \eqref{QP} is quadratic in $n_y$, we observe that the training time for QP is considerably higher (approx. 10 times) than that of ISNN. A computational time comparison is reported in Table \ref{tab:table} for 10 different random sampling instances with fixed cardinality $(n_x,n_y)=(500,200)$ while keeping all other simulation settings fixed.

\begin{figure}[tb]
    \centering
\includegraphics[width=\linewidth]{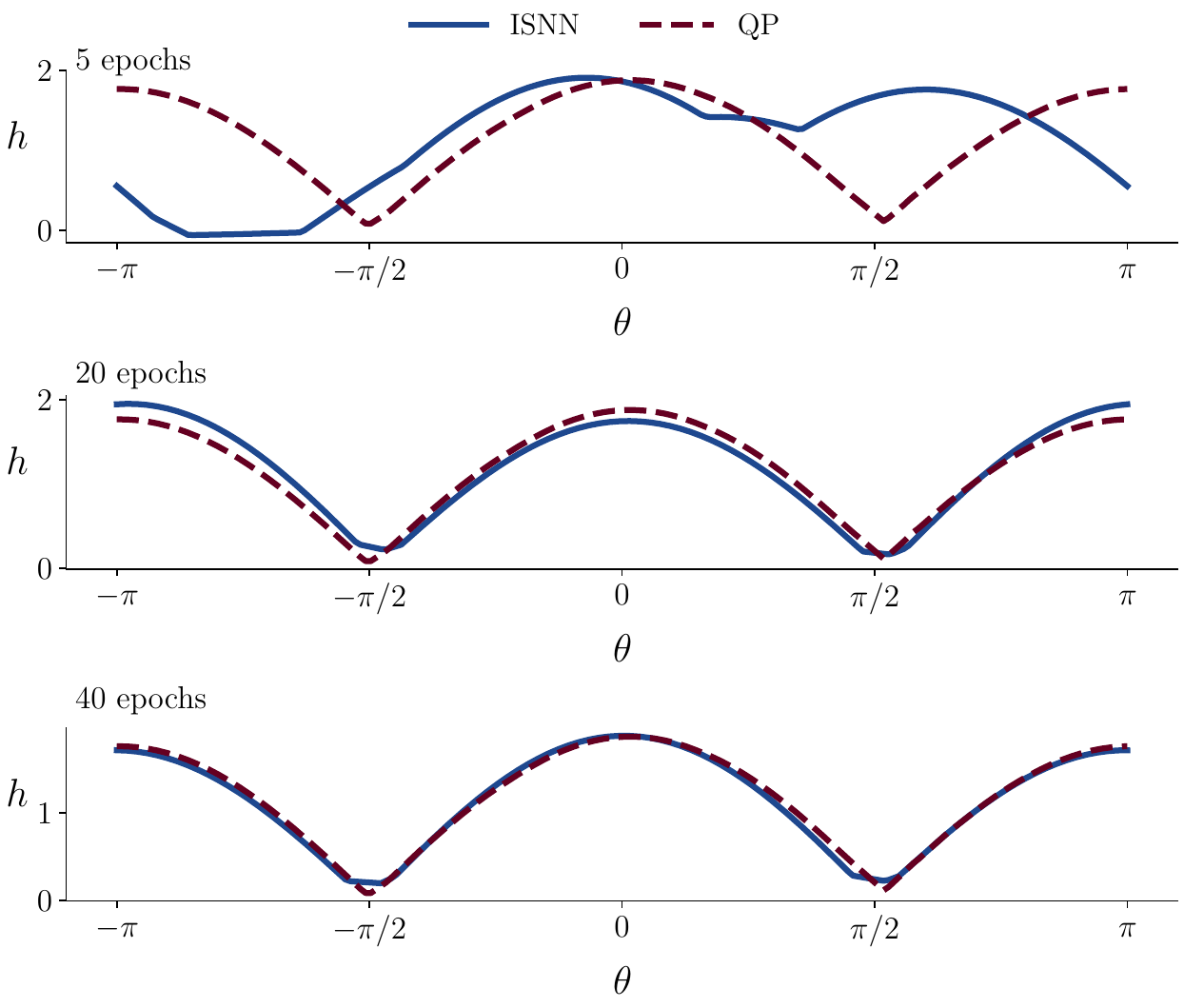}
    \caption{{\small{The estimated support function of the projection of the reach set $\mathcal{X}_t$ onto the position coordinates: ${\rm{proj}}\left(\mathcal{X}_t\right)\subseteq\mathbb{R}^2$ at $t=2$ s for the example in Sec. \ref{subsec:Example2}, resulting from the proposed sublinear regression. All subfigures are plotted over the polar coordinate $\theta \in (-\pi,\pi]$. Different epochs for the ISNN results are indicated in the respective subfigures.}}}
\vspace*{-0.2in}
\label{Ex2}
\end{figure}
\begin{figure}[tb]
    \centering
\includegraphics[width=\linewidth]{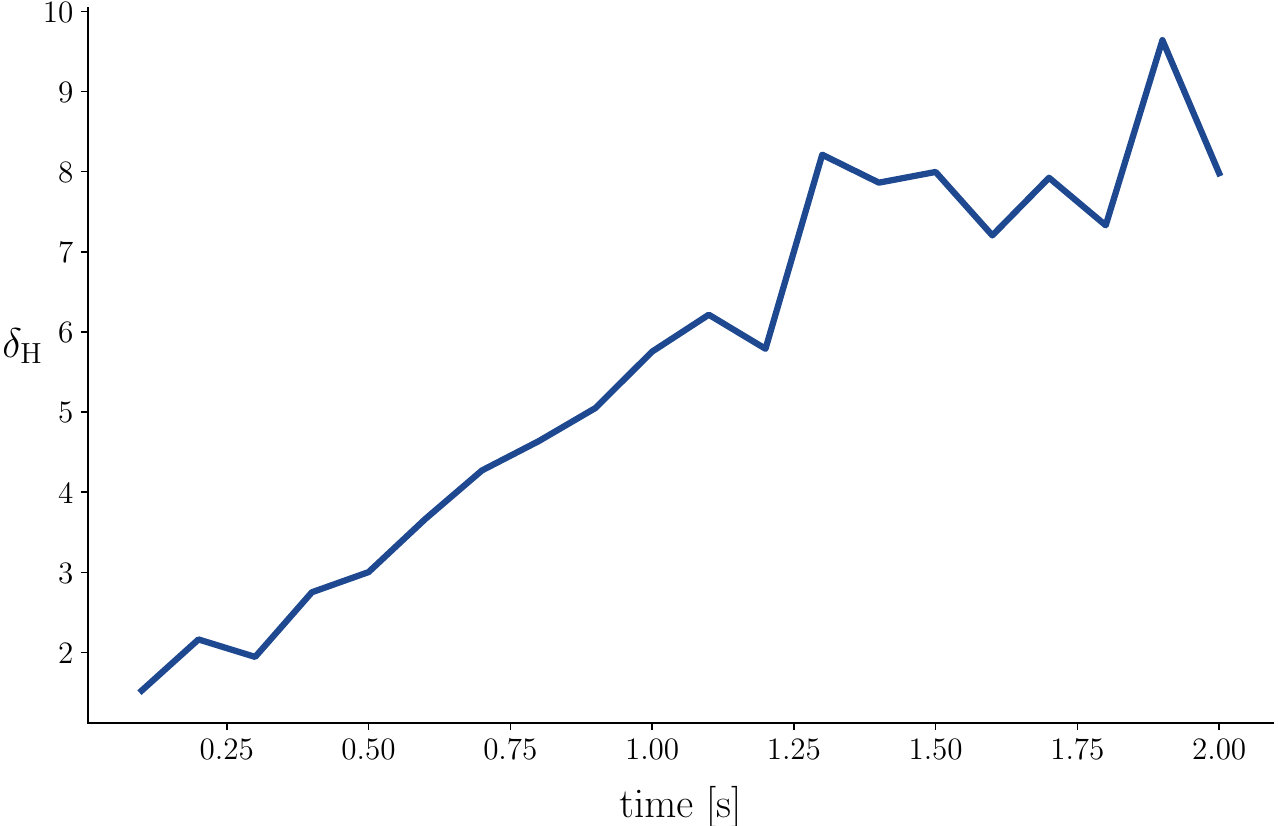}
    \caption{{\small{The Hausdorff distance \eqref{HausdorffSptFnEx} between ${\rm{proj}}\left(\mathcal{X}^{\texttt{A}}_{\tau}\right),{\rm{proj}}\left(\mathcal{X}^{\texttt{B}}_{\tau}\right) \in \mathbb{R}^2$ for $0 \leq \tau \leq 2$ s for the example in Sec. \ref{subsec:Example2}. The corresponding support functions $\widehat{h}^{\texttt{A}}_{{\rm{proj}}\left(\mathcal{X}_\tau\right)}$ and $\widehat{h}^{\texttt{B}}_{{\rm{proj}}\left(\mathcal{X}_\tau\right)}$ are learnt by performing the sublinear regressions using ISNN with 40 epochs for both agents.}}}
\vspace*{-0.2in}
\label{fig:HausdorffDist}
\end{figure}

\subsection{Kinematic Bicycle}\label{subsec:Example2}
We next consider the controlled dynamics for the kinematic bicycle \cite[p. 22]{rajamani2011vehicle}
\begin{align}
\begin{split}
\dot{x}_1&=x_3\cos(x_4+\beta),\quad \dot{x}_2=x_3\sin(x_4+\beta),\\
\dot{x}_3&=u_1,\quad \quad \quad \quad \quad \quad \!~ \dot{x}_4=x_3\sin(\beta)/1.5,\\
\end{split}
\label{ex2}
\end{align}
wherein the sideslip angle
$$\beta :=\arctan\left( 0.6 \tan u_2\right).$$
The state vector $\left(x_1,x_2,x_3,x_4\right)^{\top}\in\mathbb{R}^{3}\times \mathbb{S}^{1}$ comprises of the 2D inertial position
$(x_1,x_2)$ denoting the vehicle’s center of mass, its speed $x_3$, and
the vehicle’s inertial heading angle $x_4$. The control vector 
comprises of the acceleration $u_1$, and the front steering wheel
angle $u_2$. We consider $(u_1,u_2) \in \mathcal{U} := [-1,1] ~\text{m}/\text{s}^2\times [-10^{\circ},10^{\circ}]$. We fix $\mathcal{X}_0 = \{\bm{0}\}$ (singleton).

We follow the sampling procedure detailed in Sec. \ref{subsec:sampling} with $(n_x,n_y)=(500,200)$. We estimate the support function of the projection of the reach set $\mathcal{X}_t\subset\mathbb{R}^3\times\mathbb{S}^1$ for \eqref{ex2} onto the position coordinates, i.e., ${\rm{proj}}\left(\mathcal{X}_t\right)\subseteq\mathbb{R}^2$ at $t=2$ s. 

As in Sec. \ref{subsec:Example1}, for sublinear regression, we use the  QP, and the ISNN with three different number of epochs: 5, 20 and 40 epochs. Consistent with the observation made before, Fig. \ref{Ex2} reveals that as the number of epochs for ISNN increases, the ISNN estimates approach the QP estimate.

To further illustrate the use of support function learning representations for the reach sets, consider the reach sets of two agents $\texttt{A}$ and $\texttt{B}$ with identical dynamics \eqref{ex2}, respective inputs $(u_1^\texttt{A},u_2^\texttt{A})\in\mathcal{U}^\texttt{A}$ and $(u_1^\texttt{B},u_2^\texttt{B})\in\mathcal{U}^\texttt{B}$, and singleton initial conditions $\{\bm{x}_0^{\texttt{A}}\}, \{\bm{x}_0^{\texttt{B}}\}$ with   
 \begin{subequations}
 \begin{align}
 &\mathcal{U}^\texttt{A} := [-1,1] ~\text{m}/\text{s}^2\times [-10^{\circ},10^{\circ}],\\
  &\mathcal{U}^\texttt{B} := [-1.2,1] ~\text{m}/\text{s}^2\times [-2^{\circ},15^{\circ}],\\
  &\bm{x}_0^{\texttt{A}}:=(-1,1,10,0.1)^{\top},~\bm{x}_0^{\texttt{B}}:=(0,0,8,-0.5)^{\top}.
\end{align} 
 \end{subequations}
For $0\leq \tau \leq t$, let us denote the respective reach sets as $\mathcal{X}^{\texttt{A}}_{\tau}, \mathcal{X}^{\texttt{B}}_{\tau}\subset\mathbb{R}^{3}\times\mathbb{S}^1$.

We wish to estimate the Hausdorff distance $\delta_{\mathrm{H}}(\tau)$, $0\leq \tau \leq t$, between the projections of the reach sets of agents $\texttt{A}$ and $\texttt{B}$ on the $(x_1,x_2)$ inertial position plane. From \eqref{HausdorffSptFn},

\begin{align}
\delta_{\mathrm{H}}(\tau)=\sup _{\bm{y} \in \mathbb{S}^{d-1}}\left|\widehat{h}_{{\rm{proj}}\left(\mathcal{X}_\tau^{\texttt{A}}\right)}(\bm{y})-\widehat{h}_{{\rm{proj}}\left(\mathcal{X}_\tau^{\texttt{B}}\right)}(\bm{y})\right|, \quad 0\leq \tau \leq t.
\label{HausdorffSptFnEx}
\end{align}
Fig. \ref{fig:HausdorffDist} shows the evolution of the Hausdorff distance between the sets ${\rm{proj}}\left(\mathcal{X}_\tau^{\texttt{A}}\right)$ and ${\rm{proj}}\left(\mathcal{X}_\tau^{\texttt{B}}\right)$ for $0\leq \tau\leq 2$ s. For this computation, We performed the sublinear regression using ISNN with 40 epochs for the noisy measurements of support functions $\widehat{h}^{\texttt{A}}_{{\rm{proj}}\left(\mathcal{X}_\tau\right)}(\bm{y})$ and $\widehat{h}^{\texttt{B}}_{{\rm{proj}}\left(\mathcal{X}_\tau\right)}(\bm{y})$ generated following the steps in Sec. \ref{subsec:sampling}. Both finite sampling and numerical approximation errors in ISNN regression  contribute to the fluctuations observed in Fig. \ref{fig:HausdorffDist}.


\section{Conclusions}\label{sec:conclusions}
In this work, we propose data-driven learning of compact sets in general, and reach sets in particular, by learning the corresponding support function representations. We point out an equivalence between the support functions and the class of sublinear functions, and propose leveraging the same for performing sublinear regression. We numerically demonstrate and compare two approaches: the \emph{first} involves convex quadratic programming (QP), and the \emph{second} being ISNN that involves nonconvex programming. Our numerical experiments reveal that among the two, the ISNN is numerically faster but the QP solution is more robust and comes with consistency guarantee. We empirically observe that ISNN with modest number of epochs can be a practical alternative to QP without incurring as much computational cost as the latter.




\bibliographystyle{IEEEtran}
\bibliography{References.bib}

\end{document}